\documentclass[11pt]{article}

\usepackage[T1]{fontenc}
\usepackage[utf8]{inputenc}
\usepackage{amsmath,amssymb,amsthm,mathtools}
\usepackage{geometry}
\usepackage{hyperref}

\newtheorem{theorem}{Theorem}
\newtheorem{proposition}[theorem]{Proposition}
\newtheorem{corollary}[theorem]{Corollary}

\theoremstyle{remark}
\newtheorem{remark}[theorem]{Remark}

\newcommand{\R}{\mathbb{R}}
\newcommand{\Tr}{\operatorname{Tr}}
\newcommand{\im}{\operatorname{im}}
\newcommand{\Ker}{\operatorname{ker}}
\newcommand{\doi}[1]{\href{https://doi.org/#1}{doi:#1}}

\title{
Information Loss under Gauge Reduction\\
of Discrete Electromagnetic Fields
}

\author{
Jean-Pierre Magnot\\
SFR MATHSTIC, LAREMA, Université d'Angers, France\\
Lepage Research Institute, Prešov, Slovakia\\
\texttt{magnot@math.cnrs.fr}
}

\date{}

\begin{document}

\maketitle

\begin{abstract}
We study relative entropy under gauge reduction in
Whitney-discretized electromagnetism. The discrete Hodge
decomposition separates exact gauge directions from harmonic and
coexact physical degrees of freedom. For Gaussian distributions on
the auxiliary potential space, relative entropy splits exactly into
the relative entropy of the physical marginals and a nonnegative
conditional gauge contribution. We express this contribution through
Schur complements, conditional means, and physical--gauge
correlations, and characterize equality between auxiliary and
physical relative entropies. The physical divergence is also shown
to be the minimum over all Gaussian auxiliary extensions. At the
infinitesimal level, Fisher information admits an analogous
decomposition. A two-variable model illustrates how statistically
distinct auxiliary potentials may represent identical physical
statistics. The construction provides an information-geometric
interpretation of discrete gauge reduction without assigning
physical significance to the eliminated gauge variables.
\end{abstract}

\medskip

\noindent
\textbf{Keywords:}
discrete electromagnetism; gauge reduction; Whitney forms;
Gaussian states; relative entropy; Fisher information.

\medskip

\noindent
\textbf{2020 Mathematics Subject Classification:}
81T13; 94A17; 65N30; 58A12.

\section{Introduction}

Gauge potentials contain more variables than are required to
describe electromagnetic observables. In the Abelian setting, two
potentials related by
\begin{equation}
A\longmapsto A+d\chi
\label{eq:continuous-gauge}
\end{equation}
determine the same electromagnetic field.

This redundancy plays a central role in both classical field theory
and covariant quantization. From a statistical perspective, it
raises a natural question: how much distinguishability between two
auxiliary descriptions disappears when unobservable gauge variables
are removed?

The answer is especially transparent for compatible discrete
electromagnetic models. Whitney forms and finite element exterior
calculus preserve the de Rham complex at the discrete level
\cite{ArnoldFalkWinther2006,ArnoldFalkWinther2010}. Their
application to electromagnetic systems provides gauge-compatible
discretizations based on simplicial incidence operators and
discrete Hodge decompositions
\cite{Bossavit1998,Stern2015,Desbrun2005}.

In this framework, the space of discrete potentials decomposes as
\begin{equation}
V
=
V_{\mathrm p}\oplus V_{\mathrm g},
\label{eq:intro-splitting}
\end{equation}
where \(V_{\mathrm g}\) consists of exact gauge components and
\(V_{\mathrm p}\) contains the harmonic and coexact physical
components.

For two probability distributions \(\mu,\nu\) on \(V\), let
\begin{equation}
\mu_{\mathrm p},
\qquad
\nu_{\mathrm p}
\end{equation}
denote their physical marginals.

The central identity is
\begin{equation}
D(\mu\|\nu)
=
D(\mu_{\mathrm p}\|\nu_{\mathrm p})
+
\int_{V_{\mathrm p}}
D\left(
\mu_{\mathrm g|x}
\middle\|
\nu_{\mathrm g|x}
\right)
\,d\mu_{\mathrm p}(x),
\label{eq:intro-chain-rule}
\end{equation}
where the second term compares the conditional gauge
distributions.

The identity is a particular instance of the classical chain rule
for relative entropy \cite{CoverThomas2006,Kullback1951}. Its
contribution here is its explicit realization on the Whitney
electromagnetic complex, including closed formulas for Gaussian
fields and a characterization of physically equivalent auxiliary
descriptions.

In particular, two auxiliary distributions may satisfy
\begin{equation}
D(\mu\|\nu)>0
\end{equation}
while
\begin{equation}
D(\mu_{\mathrm p}\|\nu_{\mathrm p})=0.
\end{equation}
Their distinguishability then originates entirely from the
unphysical gauge representation.

Throughout, we consider classical Gaussian probability measures on
finite-dimensional discrete potential spaces. No indefinite metric,
quantum density operator, or continuum limit is required.

\section{Whitney discretization and physical reduction}

Let \(M\) be a closed oriented Riemannian three-manifold, and let
\(K\) be a finite triangulation.

Write
\begin{equation}
C^k(K)
=
C^k(K;\R)
\end{equation}
for simplicial \(k\)-cochains and
\begin{equation}
\delta_k:
C^k(K)\longrightarrow C^{k+1}(K)
\end{equation}
for the coboundary.

The cochain complex satisfies
\begin{equation}
\delta_{k+1}\delta_k=0.
\label{eq:delta-square}
\end{equation}

The Whitney maps
\begin{equation}
W_k:
C^k(K)\longrightarrow\Omega^k_{\mathrm{pw}}(M)
\end{equation}
satisfy
\begin{equation}
dW_k=W_{k+1}\delta_k.
\label{eq:whitney}
\end{equation}

A discrete electromagnetic potential is
\begin{equation}
a\in C^1(K),
\end{equation}
and its discrete magnetic field is
\begin{equation}
b=\delta_1a.
\label{eq:field}
\end{equation}

Gauge transformations are
\begin{equation}
a\longmapsto a+\delta_0\chi,
\qquad
\chi\in C^0(K).
\label{eq:gauge}
\end{equation}

Since
\begin{equation}
\delta_1\delta_0=0,
\end{equation}
the field \eqref{eq:field} is unchanged.

We equip each cochain space with the Whitney inner product
\begin{equation}
\langle u,v\rangle_k
=
\int_M
W_k(u)\wedge\star W_k(v).
\label{eq:whitney-inner}
\end{equation}

Let
\begin{equation}
\delta_k^\ast
\end{equation}
be the associated adjoint.

The degree-one harmonic space is
\begin{equation}
\mathcal H_K^1
=
\Ker\delta_1
\cap
\Ker\delta_0^\ast.
\label{eq:harmonic}
\end{equation}

The discrete Hodge decomposition reads
\begin{equation}
C^1(K)
=
\im\delta_0
\oplus
\mathcal H_K^1
\oplus
\im\delta_1^\ast.
\label{eq:hodge}
\end{equation}

Define
\begin{equation}
V_{\mathrm g}
=
\im\delta_0
\label{eq:gauge-sector}
\end{equation}
and
\begin{equation}
V_{\mathrm p}
=
\mathcal H_K^1
\oplus
\im\delta_1^\ast.
\label{eq:physical-sector}
\end{equation}

Then
\begin{equation}
C^1(K)
=
V_{\mathrm p}\oplus V_{\mathrm g}.
\label{eq:orthogonal-splitting}
\end{equation}

The physical projection is
\begin{equation}
P_{\mathrm p}:
C^1(K)\longrightarrow V_{\mathrm p}.
\label{eq:projection}
\end{equation}

Because \(P_{\mathrm p}\) annihilates
\(\im\delta_0\),
\begin{equation}
P_{\mathrm p}(a+\delta_0\chi)
=
P_{\mathrm p}(a).
\label{eq:projection-gauge}
\end{equation}

Moreover,
\begin{equation}
C^1(K)/\im\delta_0
\simeq
V_{\mathrm p}.
\label{eq:quotient}
\end{equation}

\begin{remark}
Harmonic components are retained in \(V_{\mathrm p}\). Although
their magnetic curvature vanishes, they are not exact gauge
directions and may encode physical holonomy information in
situations where the corresponding observables are available.
The physical projection used here therefore represents reduction
modulo exact gauge transformations, rather than restriction to
curvature observables alone.
\end{remark}

Choose Whitney-orthonormal coordinates
\begin{equation}
a=(x,y),
\qquad
x\in V_{\mathrm p},
\quad
y\in V_{\mathrm g}.
\label{eq:coordinates}
\end{equation}

Set
\begin{equation}
p=\dim V_{\mathrm p},
\qquad
q=\dim V_{\mathrm g}.
\label{eq:dimensions}
\end{equation}

\section{Relative entropy and conditional gauge information}

Let \(\mu,\nu\) be nondegenerate Gaussian probability measures
on
\begin{equation}
V_{\mathrm p}\oplus V_{\mathrm g}.
\end{equation}

Their physical marginals are
\begin{equation}
\mu_{\mathrm p}
=
(P_{\mathrm p})_\#\mu,
\qquad
\nu_{\mathrm p}
=
(P_{\mathrm p})_\#\nu.
\label{eq:marginals}
\end{equation}

Here
\begin{equation}
(P_{\mathrm p})_\#
\end{equation}
denotes the pushforward under the physical projection.

Let
\begin{equation}
\mu_{\mathrm g|x},
\qquad
\nu_{\mathrm g|x}
\end{equation}
be the corresponding conditional gauge distributions.

For probability densities \(f,g\), relative entropy is
\begin{equation}
D(f\|g)
=
\int
f\log\frac{f}{g}.
\label{eq:relative}
\end{equation}

\begin{theorem}
\label{thm:chain}
The relative entropy satisfies
\begin{equation}
D(\mu\|\nu)
=
D(\mu_{\mathrm p}\|\nu_{\mathrm p})
+
\mathcal I_{\mathrm g}(\mu,\nu),
\label{eq:chain}
\end{equation}
where
\begin{equation}
\mathcal I_{\mathrm g}(\mu,\nu)
=
\int_{V_{\mathrm p}}
D\left(
\mu_{\mathrm g|x}
\middle\|
\nu_{\mathrm g|x}
\right)
\,d\mu_{\mathrm p}(x).
\label{eq:gauge-information}
\end{equation}

In particular,
\begin{equation}
\mathcal I_{\mathrm g}(\mu,\nu)\geq0
\label{eq:nonnegative}
\end{equation}
and
\begin{equation}
D(\mu_{\mathrm p}\|\nu_{\mathrm p})
\leq
D(\mu\|\nu).
\label{eq:contraction}
\end{equation}

Equality holds if and only if
\begin{equation}
\mu_{\mathrm g|x}
=
\nu_{\mathrm g|x}
\label{eq:conditional-equality}
\end{equation}
for \(\mu_{\mathrm p}\)-almost every \(x\).
\end{theorem}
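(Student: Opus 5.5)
The plan is to work directly with densities in the Whitney-orthonormal coordinates \eqref{eq:coordinates}. Since \(\mu,\nu\) are nondegenerate Gaussians on \(V_{\mathrm p}\oplus V_{\mathrm g}\simeq\R^{p+q}\), they have smooth, strictly positive densities \(f(x,y)\) and \(g(x,y)\) with respect to Lebesgue measure.

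Because the splitting \eqref{eq:orthogonal-splitting} is orthogonal and the coordinates are orthonormal, the projection \(P_{\mathrm p}\) is simply \((x,y)\mapsto x\). The marginal densities \eqref{eq:marginals} are therefore
\begin{equation*}
f_{\mathrm p}(x)=\int f(x,y)\,dy,\qquad g_{\mathrm p}(x)=\int g(x,y)\,dy .
\end{equation*}
Both are again nondegenerate Gaussian densities, hence positive everywhere. The conditional gauge distributions \(\mu_{\mathrm g|x},\nu_{\mathrm g|x}\) have densities
\begin{equation*}
f(y|x)=\frac{f(x,y)}{f_{\mathrm p}(x)},\qquad g(y|x)=\frac{g(x,y)}{g_{\mathrm p}(x)},
\end{equation*}
which are Gaussian, with means affine in \(x\) and covariances given by Schur complements. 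These are defined for every \(x\), so no disintegration subtleties arise.

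Next I factor \(f=f_{\mathrm p}\,f(\cdot|x)\) and \(g=g_{\mathrm p}\,g(\cdot|x)\), take logarithms, and split
\begin{equation*}
\log\frac{f}{g}=\log\frac{f_{\mathrm p}(x)}{g_{\mathrm p}(x)}+\log\frac{f(y|x)}{g(y|x)} .
\end{equation*}
Integrating against \(f\) and using Fubini gives the two terms. For the first term, integrating out \(y\) first yields \(D(\mu_{\mathrm p}\|\nu_{\mathrm p})\). For the second, writing \(f=f_{\mathrm p}f(y|x)\) and doing the inner \(y\)-integral gives \(\int D(\mu_{\mathrm g|x}\|\nu_{\mathrm g|x})\,f_{\mathrm p}(x)\,dx=\mathcal I_{\mathrm g}\). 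This establishes \eqref{eq:chain}.

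The step needing care is justifying the use of Fubini and the splitting of the integral. I would handle this via Gaussian integrability. Each log-ratio above is a quadratic polynomial in \((x,y)\): the log of a ratio of Gaussian densities is quadratic, and the conditional log-ratio is a difference of such quadratics. Gaussian measures have finite moments of all orders, so each piece is absolutely integrable under \(\mu\). The integral of the sum therefore equals the sum of the integrals, and Tonelli/Fubini applies to each piece separately.

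For \eqref{eq:nonnegative}, I invoke Gibbs' inequality: each conditional divergence \(D(\mu_{\mathrm g|x}\|\nu_{\mathrm g|x})\) is nonnegative, by Jensen applied to \(-\log\). Averaging over \(x\) preserves this, and \eqref{eq:contraction} follows immediately from \eqref{eq:chain}.

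For the equality case, note that equality in \eqref{eq:contraction} holds iff \(\mathcal I_{\mathrm g}=0\). Since \(\mathcal I_{\mathrm g}\) is the integral of a nonnegative function against \(\mu_{\mathrm p}\), this happens iff \(D(\mu_{\mathrm g|x}\|\nu_{\mathrm g|x})=0\) for \(\mu_{\mathrm p}\)-a.e. \(x\). By the equality case of Gibbs' inequality, that holds iff \(\mu_{\mathrm g|x}=\nu_{\mathrm g|x}\) a.e.

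As an optional strengthening, the conditional divergence is continuous in \(x\), because the Gaussian conditional parameters are affine and constant in \(x\). Since \(\mu_{\mathrm p}\) has full support, the condition then in fact holds for every \(x\).
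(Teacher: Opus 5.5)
Your proposal is correct and follows the paper's route. You factor each density into its physical marginal times its conditional gauge density, split the log-ratio, integrate against $f$, and obtain nonnegativity and the equality case from Gibbs' inequality; your Fubini justification (quadratic log-ratios, finite Gaussian moments) and the continuity argument upgrading $\mu_{\mathrm p}$-a.e.\ equality to every $x$ add details the paper leaves implicit.
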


\begin{proof}
Let
\begin{equation}
f(x,y)
=
f_{\mathrm p}(x)f_{\mathrm g|x}(y)
\end{equation}
and
\begin{equation}
g(x,y)
=
g_{\mathrm p}(x)g_{\mathrm g|x}(y)
\end{equation}
be the respective Gaussian densities.

Then
\begin{equation}
\log\frac{f(x,y)}{g(x,y)}
=
\log\frac{
f_{\mathrm p}(x)
}{
g_{\mathrm p}(x)
}
+
\log\frac{
f_{\mathrm g|x}(y)
}{
g_{\mathrm g|x}(y)
}.
\end{equation}

Integrating against \(f(x,y)\,dx\,dy\), the first term becomes
\begin{equation}
D(\mu_{\mathrm p}\|\nu_{\mathrm p}),
\end{equation}
while the second becomes
\begin{equation}
\int_{V_{\mathrm p}}
D\left(
\mu_{\mathrm g|x}
\middle\|
\nu_{\mathrm g|x}
\right)
\,d\mu_{\mathrm p}(x).
\end{equation}

This proves \eqref{eq:chain}. The remaining statements follow
from nonnegativity of relative entropy and its vanishing
criterion.
\end{proof}

\begin{remark}
The quantity \(\mathcal I_{\mathrm g}(\mu,\nu)\) is not an
intrinsic physical observable. It depends on the auxiliary
extensions \(\mu,\nu\). Its meaning is precisely the amount
of distinguishability removed by passing from auxiliary
descriptions to the physical gauge quotient.
\end{remark}

\section{Explicit Gaussian formula}

Write
\begin{equation}
\mu
=
\mathcal N(m_\mu,\Sigma_\mu),
\qquad
\nu
=
\mathcal N(m_\nu,\Sigma_\nu),
\label{eq:gaussians}
\end{equation}
with
\begin{equation}
m_\mu
=
\begin{pmatrix}
m_{\mu,\mathrm p}\\
m_{\mu,\mathrm g}
\end{pmatrix},
\qquad
m_\nu
=
\begin{pmatrix}
m_{\nu,\mathrm p}\\
m_{\nu,\mathrm g}
\end{pmatrix},
\label{eq:means}
\end{equation}
and
\begin{equation}
\Sigma_\mu
=
\begin{pmatrix}
A_\mu&B_\mu\\
B_\mu^\top&C_\mu
\end{pmatrix},
\qquad
\Sigma_\nu
=
\begin{pmatrix}
A_\nu&B_\nu\\
B_\nu^\top&C_\nu
\end{pmatrix}.
\label{eq:blocks}
\end{equation}

The physical marginals are
\begin{equation}
\mu_{\mathrm p}
=
\mathcal N(m_{\mu,\mathrm p},A_\mu),
\label{eq:mu-physical}
\end{equation}
and
\begin{equation}
\nu_{\mathrm p}
=
\mathcal N(m_{\nu,\mathrm p},A_\nu).
\label{eq:nu-physical}
\end{equation}

Define
\begin{equation}
R_\mu
=
B_\mu^\top A_\mu^{-1},
\qquad
R_\nu
=
B_\nu^\top A_\nu^{-1},
\label{eq:R}
\end{equation}
and the Schur complements
\begin{equation}
S_\mu
=
C_\mu-B_\mu^\top A_\mu^{-1}B_\mu,
\label{eq:Smu}
\end{equation}
\begin{equation}
S_\nu
=
C_\nu-B_\nu^\top A_\nu^{-1}B_\nu.
\label{eq:Snu}
\end{equation}

Since \(\Sigma_\mu,\Sigma_\nu\) are positive definite,
\begin{equation}
S_\mu>0,
\qquad
S_\nu>0.
\end{equation}

The conditional gauge laws are
\begin{equation}
\mu_{\mathrm g|x}
=
\mathcal N(r_\mu(x),S_\mu)
\label{eq:conditional-mu}
\end{equation}
and
\begin{equation}
\nu_{\mathrm g|x}
=
\mathcal N(r_\nu(x),S_\nu),
\label{eq:conditional-nu}
\end{equation}
where
\begin{equation}
r_\mu(x)
=
m_{\mu,\mathrm g}
+
R_\mu(x-m_{\mu,\mathrm p})
\label{eq:rmu}
\end{equation}
and
\begin{equation}
r_\nu(x)
=
m_{\nu,\mathrm g}
+
R_\nu(x-m_{\nu,\mathrm p}).
\label{eq:rnu}
\end{equation}

\begin{theorem}
\label{thm:explicit}
Set
\begin{equation}
\Delta R
=
R_\mu-R_\nu
\label{eq:delta-R}
\end{equation}
and
\begin{equation}
d
=
m_{\mu,\mathrm g}
-
m_{\nu,\mathrm g}
-
R_\nu
\left(
m_{\mu,\mathrm p}
-
m_{\nu,\mathrm p}
\right).
\label{eq:d}
\end{equation}

Then
\begin{align}
\mathcal I_{\mathrm g}(\mu,\nu)
=
\frac12
\Bigg[
&
\Tr(S_\nu^{-1}S_\mu)
-
q
+
\log
\frac{
\det S_\nu
}{
\det S_\mu
}
\nonumber\\
&+
d^\top S_\nu^{-1}d
\nonumber\\
&+
\Tr\left(
S_\nu^{-1}
\Delta R
A_\mu
\Delta R^\top
\right)
\Bigg].
\label{eq:explicit-gauge-info}
\end{align}
\end{theorem}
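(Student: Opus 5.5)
The plan is to compute $\mathcal I_{\mathrm g}(\mu,\nu)$ directly from its definition \eqref{eq:gauge-information}, using the conditional laws \eqref{eq:conditional-mu}--\eqref{eq:conditional-nu}, and then average over $x\sim\mu_{\mathrm p}=\mathcal N(m_{\mu,\mathrm p},A_\mu)$. The first step is to apply the standard closed form for relative entropy between two nondegenerate Gaussians on $V_{\mathrm g}\cong\R^q$. For fixed $x$ this gives
\begin{equation*}
D(\mu_{\mathrm g|x}\|\nu_{\mathrm g|x})
=\frac12\Big[\Tr(S_\nu^{-1}S_\mu)-q+\log\frac{\det S_\nu}{\det S_\mu}
+(r_\mu(x)-r_\nu(x))^\top S_\nu^{-1}(r_\mu(x)-r_\nu(x))\Big].
\end{equation*}
The Schur complements $S_\mu,S_\nu$ do not depend on $x$. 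Hence the trace, dimension and log-determinant terms pass through the $\mu_{\mathrm p}$-integral unchanged, since $\mu_{\mathrm p}$ is a probability measure. This already produces the first line of \eqref{eq:explicit-gauge-info}.

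The second step is to rewrite the mean discrepancy as an affine function centred at $m_{\mu,\mathrm p}$. Writing $x-m_{\nu,\mathrm p}=(x-m_{\mu,\mathrm p})+(m_{\mu,\mathrm p}-m_{\nu,\mathrm p})$ in \eqref{eq:rnu} and subtracting from \eqref{eq:rmu}, I expect
\begin{equation*}
r_\mu(x)-r_\nu(x)=d+\Delta R\,(x-m_{\mu,\mathrm p}),
\end{equation*}
with $d$ exactly as in \eqref{eq:d} and $\Delta R$ as in \eqref{eq:delta-R}. This regrouping is the only place where some care is needed. The constant part must collect $m_{\mu,\mathrm g}-m_{\nu,\mathrm g}$ together with the $R_\nu$-shift of the physical means, and the fluctuating part must carry the full difference $R_\mu-R_\nu$.

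The third step is to take the expectation of the quadratic form under $\mu_{\mathrm p}$. Set $\xi=x-m_{\mu,\mathrm p}$, so that $\xi$ has mean $0$ and covariance $A_\mu$. The cross term $2d^\top S_\nu^{-1}\Delta R\,\xi$ then integrates to zero. The constant term yields $d^\top S_\nu^{-1}d$. The remaining term yields $\mathbb E[\xi^\top\Delta R^\top S_\nu^{-1}\Delta R\,\xi]=\Tr(\Delta R^\top S_\nu^{-1}\Delta R\,A_\mu)$. By cyclicity this equals $\Tr(S_\nu^{-1}\Delta R A_\mu\Delta R^\top)$. Collecting all the terms gives \eqref{eq:explicit-gauge-info}.

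No step is a real obstacle; the argument is bookkeeping once the Gaussian KL formula and the conditional laws are taken as given. The formula could also be cross-checked via Theorem~\ref{thm:chain}, as $D(\mu\|\nu)-D(\mu_{\mathrm p}\|\nu_{\mathrm p})$ with block-determinant identities, but the direct route above is shorter.
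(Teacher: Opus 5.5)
Your proposal is correct and follows essentially the same route as the paper: apply the closed-form Gaussian relative entropy to the conditional laws, rewrite $r_\mu(x)-r_\nu(x)=d+\Delta R\,(x-m_{\mu,\mathrm p})$, and average the quadratic term under $\mu_{\mathrm p}$ using zero mean and covariance $A_\mu$. Your explicit handling of the vanishing cross term and the cyclic trace step fills in bookkeeping the paper leaves implicit.
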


\begin{proof}
For nondegenerate Gaussian measures on \(\R^q\),
\begin{align}
&D\left(
\mathcal N(r_1,S_1)
\middle\|
\mathcal N(r_2,S_2)
\right)
\nonumber\\
&\quad=
\frac12
\left[
\Tr(S_2^{-1}S_1)
-
q
+
\log
\frac{
\det S_2
}{
\det S_1
}
+
(r_1-r_2)^\top
S_2^{-1}
(r_1-r_2)
\right].
\label{eq:gaussian-relative}
\end{align}

By \eqref{eq:rmu} and \eqref{eq:rnu},
\begin{equation}
r_\mu(x)-r_\nu(x)
=
d+\Delta R(x-m_{\mu,\mathrm p}).
\label{eq:r-difference}
\end{equation}

Under \(\mu_{\mathrm p}\),
\begin{equation}
\mathbb E_{\mu_{\mathrm p}}
\left[
x-m_{\mu,\mathrm p}
\right]
=
0
\label{eq:centered}
\end{equation}
and
\begin{equation}
\mathbb E_{\mu_{\mathrm p}}
\left[
(x-m_{\mu,\mathrm p})
(x-m_{\mu,\mathrm p})^\top
\right]
=
A_\mu.
\label{eq:covariance}
\end{equation}

Therefore,
\begin{align}
&
\mathbb E_{\mu_{\mathrm p}}
\left[
(r_\mu(x)-r_\nu(x))^\top
S_\nu^{-1}
(r_\mu(x)-r_\nu(x))
\right]
\nonumber\\
&\qquad=
d^\top S_\nu^{-1}d
+
\Tr\left(
S_\nu^{-1}
\Delta R
A_\mu
\Delta R^\top
\right).
\label{eq:expectation-quadratic}
\end{align}

Integrating \eqref{eq:gaussian-relative} with respect to
\(\mu_{\mathrm p}\) gives
\eqref{eq:explicit-gauge-info}.
\end{proof}

Formula \eqref{eq:explicit-gauge-info} separates three sources
of auxiliary distinguishability:

\begin{enumerate}
\item
Differences between conditional gauge covariances, represented
by
\begin{equation}
\Tr(S_\nu^{-1}S_\mu)
-
q
+
\log
\frac{
\det S_\nu
}{
\det S_\mu
}.
\end{equation}

\item
Differences between conditional gauge means, represented by
\begin{equation}
d^\top S_\nu^{-1}d.
\end{equation}

\item
Differences between physical--gauge regression operators,
represented by
\begin{equation}
\Tr\left(
S_\nu^{-1}
\Delta R
A_\mu
\Delta R^\top
\right).
\end{equation}
\end{enumerate}

\begin{corollary}
\label{cor:equality}
One has
\begin{equation}
D(\mu\|\nu)
=
D(\mu_{\mathrm p}\|\nu_{\mathrm p})
\label{eq:no-loss}
\end{equation}
if and only if
\begin{equation}
S_\mu=S_\nu,
\label{eq:equal-S}
\end{equation}
\begin{equation}
R_\mu=R_\nu,
\label{eq:equal-R}
\end{equation}
and
\begin{equation}
m_{\mu,\mathrm g}
-
m_{\nu,\mathrm g}
=
R_\nu
\left(
m_{\mu,\mathrm p}
-
m_{\nu,\mathrm p}
\right).
\label{eq:equal-d}
\end{equation}
\end{corollary}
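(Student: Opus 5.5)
By Theorem~\ref{thm:chain}, equality \eqref{eq:no-loss} is equivalent to $\mathcal I_{\mathrm g}(\mu,\nu)=0$. Theorem~\ref{thm:explicit} writes $\mathcal I_{\mathrm g}(\mu,\nu)$ as one half of the sum of three terms. The plan is to show that each term is separately nonnegative, so that the sum vanishes if and only if each term vanishes. Then I would read off the vanishing of each term as one of the conditions \eqref{eq:equal-S}, \eqref{eq:equal-R}, \eqref{eq:equal-d}.

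\textbf{First term.} Set $M=S_\nu^{-1/2}S_\mu S_\nu^{-1/2}$, which is symmetric positive definite with eigenvalues $\lambda_1,\dots,\lambda_q>0$. The covariance term becomes
\begin{equation*}
\Tr(S_\nu^{-1}S_\mu)-q+\log\frac{\det S_\nu}{\det S_\mu}
=\sum_{i=1}^q(\lambda_i-1-\log\lambda_i).
\end{equation*}
Since $t-1-\log t\ge0$ for $t>0$, with equality only at $t=1$, this term is nonnegative. It vanishes iff all $\lambda_i=1$, i.e.\ $M=I$, i.e.\ $S_\mu=S_\nu$.

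\textbf{Second term.} The mean term $d^\top S_\nu^{-1}d$ is nonnegative because $S_\nu^{-1}>0$. It vanishes iff $d=0$, and by the definition \eqref{eq:d} this is exactly \eqref{eq:equal-d}.

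\textbf{Third term.} For the regression term, write
\begin{equation*}
\Tr\bigl(S_\nu^{-1}\Delta R A_\mu\Delta R^\top\bigr)
=\bigl\|S_\nu^{-1/2}\Delta R A_\mu^{1/2}\bigr\|_F^2\ge0.
\end{equation*}
This is the only step needing a small argument. Since $S_\nu^{-1/2}$ and $A_\mu^{1/2}$ are invertible, the Frobenius norm vanishes iff $\Delta R=0$, i.e.\ $R_\mu=R_\nu$. Here $A_\mu>0$ because it is a principal block of the positive definite matrix $\Sigma_\mu$.

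\textbf{Conclusion.} Combining the three items, $\mathcal I_{\mathrm g}=0$ holds iff all three conditions hold. The same conclusion could be cross-checked directly from the conditional-equality criterion \eqref{eq:conditional-equality}. Indeed, $\mu_{\mathrm g|x}=\nu_{\mathrm g|x}$ for $\mu_{\mathrm p}$-a.e.\ $x$ means that the affine maps $r_\mu$ and $r_\nu$ agree a.e.\ on $V_{\mathrm p}$, which has full-support Gaussian measure, and that the covariances agree. Agreement of the affine maps gives equal linear parts (so $R_\mu=R_\nu$) and $d=0$, using \eqref{eq:r-difference}. I expect no real obstacle. The only points needing care are the strict positivity used in the third term and the scalar inequality used in the first term.
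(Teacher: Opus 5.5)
Your proof is correct, but your main argument is not the one the paper uses.

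The paper does not use Theorem~\ref{thm:explicit} here. From the vanishing criterion in Theorem~\ref{thm:chain} it obtains $\mu_{\mathrm g|x}=\nu_{\mathrm g|x}$ for $\mu_{\mathrm p}$-a.e.\ $x$, and reads off $S_\mu=S_\nu$ and $r_\mu=r_\nu$ a.e. It then uses that an affine map vanishing $\mu_{\mathrm p}$-a.e.\ for a nondegenerate Gaussian vanishes identically. The linear and constant parts of that map give \eqref{eq:equal-R} and \eqref{eq:equal-d}. This is exactly the argument you relegate to a cross-check.

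You instead show that each of the three brackets in \eqref{eq:explicit-gauge-info} is separately nonnegative and vanishes exactly under one of the three conditions. You use the eigenvalues of $S_\nu^{-1/2}S_\mu S_\nu^{-1/2}$ with $t-1-\log t\ge0$, positivity of $S_\nu^{-1}$, and the Frobenius-norm rewriting with $A_\mu>0$.

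The paper's route is shorter. It needs only the general vanishing criterion for relative entropy and the affine structure of the conditional means, and no matrix inequalities. Your route relies on the explicit formula, which is correct since \eqref{eq:r-difference} and \eqref{eq:expectation-quadratic} check out, but it buys more:
\begin{itemize}
\item It proves both implications at once. The paper writes out only the forward one, the converse being immediate.
\item It shows that the covariance, mean, and regression contributions are individually nonnegative. The paper asserts this informally when it speaks of three independent types of auxiliary information, but never verifies it.
\item It makes the statement quantitative, since each term measures how badly the corresponding condition fails.
\end{itemize}
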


\begin{proof}
By Theorem \ref{thm:chain}, equality holds if and only if
\begin{equation}
\mu_{\mathrm g|x}
=
\nu_{\mathrm g|x}
\end{equation}
for \(\mu_{\mathrm p}\)-almost every \(x\).

Equality of the conditional Gaussian measures implies
\begin{equation}
S_\mu=S_\nu
\end{equation}
and
\begin{equation}
r_\mu(x)=r_\nu(x)
\end{equation}
for \(\mu_{\mathrm p}\)-almost every \(x\).

Because \(\mu_{\mathrm p}\) is nondegenerate and the
difference
\begin{equation}
r_\mu(x)-r_\nu(x)
\end{equation}
is affine in \(x\), this affine function must vanish
identically.

Its linear part gives
\begin{equation}
R_\mu=R_\nu,
\end{equation}
and its constant part gives
\eqref{eq:equal-d}.
\end{proof}

\section{Physical relative entropy as minimal auxiliary
distinguishability}

Let
\begin{equation}
\bar\mu,
\qquad
\bar\nu
\end{equation}
be nondegenerate Gaussian probability measures on
\(V_{\mathrm p}\).

Define
\begin{equation}
\operatorname{Ext}_{\mathrm G}(\bar\mu)
=
\left\{
\mu:
\mu \text{ is a nondegenerate Gaussian measure on }
V_{\mathrm p}\oplus V_{\mathrm g},
\;
(P_{\mathrm p})_\#\mu=\bar\mu
\right\}.
\label{eq:extensions}
\end{equation}

Define
\begin{equation}
\operatorname{Ext}_{\mathrm G}(\bar\nu)
\end{equation}
analogously.

\begin{proposition}
\label{prop:minimal}
The physical relative entropy satisfies
\begin{equation}
D(\bar\mu\|\bar\nu)
=
\min_{\substack{
\mu\in\operatorname{Ext}_{\mathrm G}(\bar\mu)
\\
\nu\in\operatorname{Ext}_{\mathrm G}(\bar\nu)
}}
D(\mu\|\nu).
\label{eq:minimal}
\end{equation}

A pair of Gaussian extensions attains the minimum if and only
if
\begin{equation}
\mu_{\mathrm g|x}
=
\nu_{\mathrm g|x}
\end{equation}
for \(\bar\mu\)-almost every \(x\).
\end{proposition}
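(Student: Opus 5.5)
We prove \eqref{eq:minimal} by a lower bound valid for every pair of extensions, together with an explicit pair that attains it.

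\emph{Lower bound.} Fix any $\mu\in\operatorname{Ext}_{\mathrm G}(\bar\mu)$ and $\nu\in\operatorname{Ext}_{\mathrm G}(\bar\nu)$. By definition $\mu_{\mathrm p}=\bar\mu$ and $\nu_{\mathrm p}=\bar\nu$. Theorem \ref{thm:chain} then gives
\begin{equation*}
D(\mu\|\nu)=D(\bar\mu\|\bar\nu)+\mathcal I_{\mathrm g}(\mu,\nu)\geq D(\bar\mu\|\bar\nu).
\end{equation*}
So the infimum over pairs of extensions is at least $D(\bar\mu\|\bar\nu)$.

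\emph{Attainment.} Write $\bar\mu=\mathcal N(m_{\mu,\mathrm p},A_\mu)$ and $\bar\nu=\mathcal N(m_{\nu,\mathrm p},A_\nu)$. Take the product extensions
\begin{equation*}
\mu=\bar\mu\otimes\mathcal N(0,I_q),\qquad \nu=\bar\nu\otimes\mathcal N(0,I_q).
\end{equation*}
In block form this means $B_\mu=B_\nu=0$, $C_\mu=C_\nu=I_q$ and $m_{\mu,\mathrm g}=m_{\nu,\mathrm g}=0$. Both covariances are block diagonal with positive definite blocks, so both measures are nondegenerate Gaussians. Their physical marginals are $\bar\mu$ and $\bar\nu$, so they lie in the extension sets.

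For this pair $R_\mu=R_\nu=0$ and $S_\mu=S_\nu=I_q$. Hence the constant term in \eqref{eq:d} is $d=0-0-0=0$. Corollary \ref{cor:equality} therefore gives $\mathcal I_{\mathrm g}(\mu,\nu)=0$. Equivalently, every term in \eqref{eq:explicit-gauge-info} vanishes, or directly $\mu_{\mathrm g|x}=\nu_{\mathrm g|x}=\mathcal N(0,I_q)$ for every $x$. Thus $D(\mu\|\nu)=D(\bar\mu\|\bar\nu)$, the infimum is attained, and it is a genuine minimum.

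\emph{Characterization of minimizers.} By the lower-bound identity, a pair attains the minimum exactly when $\mathcal I_{\mathrm g}(\mu,\nu)=0$. The equality clause of Theorem \ref{thm:chain} says this holds if and only if $\mu_{\mathrm g|x}=\nu_{\mathrm g|x}$ for $\mu_{\mathrm p}$-almost every $x$. Since $\mu_{\mathrm p}=\bar\mu$, this is the stated condition.

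No step is a serious obstacle. The only points needing care are that the product extensions are nondegenerate and really have marginals $\bar\mu$ and $\bar\nu$, which holds by the block-diagonal structure. One should also note that the minimum is finite, since $D(\bar\mu\|\bar\nu)<\infty$ for nondegenerate Gaussians, so subtracting it in the equality argument is legitimate.
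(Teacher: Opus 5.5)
Your proposal is correct and follows the paper's proof. The chain rule of Theorem \ref{thm:chain} gives the lower bound, and the product extensions $\bar\mu\otimes\tau$, $\bar\nu\otimes\tau$ attain it; you take the specific choice $\tau=\mathcal N(0,I_q)$, whereas the paper allows any nondegenerate Gaussian $\tau$. The characterization of minimizers is the equality clause of the same theorem.
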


\begin{proof}
For every admissible pair of extensions, Theorem
\ref{thm:chain} gives
\begin{equation}
D(\mu\|\nu)
=
D(\bar\mu\|\bar\nu)
+
\mathcal I_{\mathrm g}(\mu,\nu)
\geq
D(\bar\mu\|\bar\nu).
\end{equation}

Choose any nondegenerate Gaussian measure
\begin{equation}
\tau
\end{equation}
on \(V_{\mathrm g}\), and define
\begin{equation}
\mu=\bar\mu\otimes\tau,
\qquad
\nu=\bar\nu\otimes\tau.
\end{equation}
Then
\begin{equation}
\mu_{\mathrm g|x}
=
\nu_{\mathrm g|x}
=
\tau,
\end{equation}
and hence
\begin{equation}
D(\mu\|\nu)
=
D(\bar\mu\|\bar\nu).
\end{equation}

The equality characterization follows directly from Theorem
\ref{thm:chain}.
\end{proof}

Thus physical relative entropy is the least auxiliary
distinguishability compatible with the prescribed physical
states.

\section{Fisher information under gauge reduction}

Let
\begin{equation}
\theta
\longmapsto
\mu_\theta
\end{equation}
be a smooth family of nondegenerate Gaussian measures on
\begin{equation}
V_{\mathrm p}\oplus V_{\mathrm g}.
\end{equation}

Write its density as
\begin{equation}
f_\theta(x,y)
=
f_{\theta,\mathrm p}(x)
f_{\theta,\mathrm g|x}(y).
\label{eq:family-factorization}
\end{equation}

The auxiliary Fisher information matrix is
\begin{equation}
g^{\mathrm{aux}}_{ij}(\theta)
=
\int
\partial_i\log f_\theta
\,
\partial_j\log f_\theta
\,
f_\theta
\,dx\,dy.
\label{eq:aux-fisher}
\end{equation}

The physical Fisher matrix is
\begin{equation}
g^{\mathrm{phys}}_{ij}(\theta)
=
\int
\partial_i\log f_{\theta,\mathrm p}
\,
\partial_j\log f_{\theta,\mathrm p}
\,
f_{\theta,\mathrm p}
\,dx.
\label{eq:phys-fisher}
\end{equation}

\begin{proposition}
\label{prop:fisher}
The Fisher information decomposes as
\begin{equation}
g^{\mathrm{aux}}_{ij}(\theta)
=
g^{\mathrm{phys}}_{ij}(\theta)
+
g^{\mathrm{cond}}_{ij}(\theta),
\label{eq:fisher-decomposition}
\end{equation}
where
\begin{align}
g^{\mathrm{cond}}_{ij}(\theta)
&=
\int_{V_{\mathrm p}}
\left[
\int_{V_{\mathrm g}}
\partial_i
\log f_{\theta,\mathrm g|x}(y)
\right.
\nonumber\\
&\hspace{4em}
\left.
\times
\partial_j
\log f_{\theta,\mathrm g|x}(y)
\,
f_{\theta,\mathrm g|x}(y)
\,dy
\right]
f_{\theta,\mathrm p}(x)
\,dx.
\label{eq:conditional-fisher}
\end{align}

In particular,
\begin{equation}
g^{\mathrm{aux}}(\theta)
-
g^{\mathrm{phys}}(\theta)
\geq0
\label{eq:fisher-positive}
\end{equation}
as a quadratic form.
\end{proposition}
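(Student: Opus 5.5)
Our plan is to differentiate the factorization \eqref{eq:family-factorization} in the parameter and expand the product of scores. Taking logarithms gives
\begin{equation}
\partial_i\log f_\theta(x,y)
=
\partial_i\log f_{\theta,\mathrm p}(x)
+
\partial_i\log f_{\theta,\mathrm g|x}(y).
\end{equation}
Substituting this into \eqref{eq:aux-fisher} produces four terms. The pure physical term integrates out $y$, because $\int f_{\theta,\mathrm g|x}\,dy=1$, and gives $g^{\mathrm{phys}}_{ij}$. The pure conditional term is exactly \eqref{eq:conditional-fisher} after writing $f_\theta\,dx\,dy=f_{\theta,\mathrm g|x}(y)\,dy\,f_{\theta,\mathrm p}(x)\,dx$ via Fubini.

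The two cross terms are the only real content, and we expect them to be the main obstacle. For fixed $x$ we use the conditional score identity
\begin{equation}
\int_{V_{\mathrm g}}
\partial_j\log f_{\theta,\mathrm g|x}(y)\,
f_{\theta,\mathrm g|x}(y)\,dy
=
\partial_j\int_{V_{\mathrm g}} f_{\theta,\mathrm g|x}(y)\,dy
=
0.
\end{equation}
Since $\partial_i\log f_{\theta,\mathrm p}(x)$ does not depend on $y$, each cross term vanishes after the inner $y$-integration. The differentiation under the integral sign must be justified. Because $\mu_\theta$ is a smooth family of nondegenerate Gaussians, the conditional laws are $\mathcal N(r_\theta(x),S_\theta)$, as in \eqref{eq:conditional-mu}. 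Here $S_\theta$ and the affine map $r_\theta$ depend smoothly on $\theta$, with $S_\theta>0$. The derivatives of $\log f_{\theta,\mathrm g|x}$ are therefore polynomials of degree at most two in $(x,y)$. Locally uniformly in $\theta$, they are dominated by a polynomial times a Gaussian envelope, so dominated convergence applies. The same bounds make all four terms finite, which legitimizes the expansion and Fubini.

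For \eqref{eq:fisher-positive}, fix $v\in\R^{\dim\theta}$. Then
\begin{equation}
v^\top g^{\mathrm{cond}}(\theta)v
=
\int_{V_{\mathrm p}}\int_{V_{\mathrm g}}
\Big(\sum_i v_i\,\partial_i\log f_{\theta,\mathrm g|x}(y)\Big)^2
f_{\theta,\mathrm g|x}(y)\,dy\,f_{\theta,\mathrm p}(x)\,dx
\geq 0.
\end{equation}
Combined with \eqref{eq:fisher-decomposition}, this gives $g^{\mathrm{aux}}-g^{\mathrm{phys}}=g^{\mathrm{cond}}\geq0$.

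As a cross-check one could instead take the Hessian of the identity of Theorem~\ref{thm:chain} at $\nu=\mu_\theta$, $\mu=\mu_{\theta+\varepsilon v}$. We expect that route to require more regularity bookkeeping than the direct score computation, so the direct argument is preferred.
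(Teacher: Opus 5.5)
Your proposal is correct and follows the same route as the paper. You split the score via the factorization and kill the two mixed terms with the conditional score identity $\int_{V_{\mathrm g}} \partial_j \log f_{\theta,\mathrm g|x}\, f_{\theta,\mathrm g|x}\,dy=0$; positivity of $g^{\mathrm{cond}}$ then follows as an integral of a square. Your domination argument, based on the Gaussian form of the conditional laws, additionally justifies the differentiation under the integral sign and the use of Fubini, which the paper leaves implicit.
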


\begin{proof}
From \eqref{eq:family-factorization},
\begin{equation}
\partial_i\log f_\theta
=
\partial_i\log f_{\theta,\mathrm p}
+
\partial_i\log f_{\theta,\mathrm g|x}.
\end{equation}

Expanding the product in
\eqref{eq:aux-fisher} gives a physical term, a conditional
term, and two mixed terms.

For each fixed \(x\),
\begin{align}
\int_{V_{\mathrm g}}
\partial_i
\log f_{\theta,\mathrm g|x}(y)
\,
f_{\theta,\mathrm g|x}(y)
\,dy
&=
\int_{V_{\mathrm g}}
\partial_i
f_{\theta,\mathrm g|x}(y)
\,dy
\nonumber\\
&=
\partial_i
\int_{V_{\mathrm g}}
f_{\theta,\mathrm g|x}(y)
\,dy
\nonumber\\
&=
0.
\end{align}

Therefore both mixed terms vanish, yielding
\eqref{eq:fisher-decomposition}.

The conditional Fisher information is positive
semidefinite, proving \eqref{eq:fisher-positive}.
\end{proof}

The identity
\eqref{eq:fisher-decomposition} is the infinitesimal analogue
of the entropy decomposition \eqref{eq:chain}.

\section{A two-variable gauge model}

Consider
\begin{equation}
V_{\mathrm p}
=
\R,
\qquad
V_{\mathrm g}
=
\R.
\end{equation}

Let
\begin{equation}
x\sim\mathcal N(0,1)
\end{equation}
under both auxiliary descriptions.

Fix
\begin{equation}
s>0,
\end{equation}
and define
\begin{equation}
y|x
\sim
\mathcal N(rx,s^2)
\label{eq:first-conditional}
\end{equation}
under \(\mu\), while
\begin{equation}
y|x
\sim
\mathcal N(r'x,s^2)
\label{eq:second-conditional}
\end{equation}
under \(\nu\).

The joint covariance matrices are
\begin{equation}
\Sigma_\mu
=
\begin{pmatrix}
1&r\\
r&r^2+s^2
\end{pmatrix}
\label{eq:Sigma-mu}
\end{equation}
and
\begin{equation}
\Sigma_\nu
=
\begin{pmatrix}
1&r'\\
r'&(r')^2+s^2
\end{pmatrix}.
\label{eq:Sigma-nu}
\end{equation}

Both matrices have determinant
\begin{equation}
s^2>0.
\end{equation}

The physical marginals coincide:
\begin{equation}
\mu_{\mathrm p}
=
\nu_{\mathrm p}
=
\mathcal N(0,1).
\label{eq:same-physical}
\end{equation}
Therefore,
\begin{equation}
D(\mu_{\mathrm p}\|\nu_{\mathrm p})
=
0.
\label{eq:zero-physical}
\end{equation}

However,
\begin{equation}
D\left(
\mu_{\mathrm g|x}
\middle\|
\nu_{\mathrm g|x}
\right)
=
\frac{
(r-r')^2x^2
}{
2s^2
}.
\label{eq:conditional-example}
\end{equation}

Since
\begin{equation}
\mathbb E[x^2]=1,
\end{equation}
Theorem \ref{thm:chain} gives
\begin{equation}
\boxed{
D(\mu\|\nu)
=
\frac{
(r-r')^2
}{
2s^2
}.
}
\label{eq:example-result}
\end{equation}

Thus the entire auxiliary distinguishability is generated by
different correlations between the physical coordinate and
the gauge coordinate.

If \(r\neq r'\), then
\begin{equation}
D(\mu\|\nu)>0,
\end{equation}
although every observable depending only on \(x\) has
identical statistics under the two descriptions.

\section{Discussion and conclusion}

The discrete Hodge decomposition distinguishes exact gauge
directions from harmonic and coexact physical degrees of
freedom.

For Gaussian electromagnetic potentials, this distinction
induces an exact entropy balance:
\begin{equation}
\boxed{
D(\mu\|\nu)
=
D(\mu_{\mathrm p}\|\nu_{\mathrm p})
+
\mathcal I_{\mathrm g}(\mu,\nu).
}
\label{eq:final-chain}
\end{equation}

The additional term
\(\mathcal I_{\mathrm g}\) consists of three independent
types of auxiliary information: differences in conditional
gauge covariances, differences in conditional gauge means,
and differences in physical--gauge correlations.

The physical relative entropy is the minimum over Gaussian
auxiliary extensions:
\begin{equation}
D(\bar\mu\|\bar\nu)
=
\min_{\mu,\nu}
D(\mu\|\nu).
\label{eq:final-minimum}
\end{equation}

At the infinitesimal level, the Fisher information obeys
\begin{equation}
g^{\mathrm{aux}}
=
g^{\mathrm{phys}}
+
g^{\mathrm{cond}},
\qquad
g^{\mathrm{cond}}\geq0.
\label{eq:final-fisher}
\end{equation}

These identities do not assert that gauge variables possess
independent physical significance. On the contrary, they
quantify precisely the statistical information discarded
when passing from auxiliary electromagnetic potentials to
their gauge-equivalence classes.

The construction applies to finite Whitney complexes and
classical Gaussian fields. Continuum limits, quantum
relative entropy, non-Abelian gauge theories, and dynamical
evolution require separate analysis.

\vskip 12pt

\paragraph{\bf Data availability statement} No data is available for this work.

\vskip 12pt

\paragraph{\bf Conflict of interest statement} The author declares no conflict of interest.

\vskip 12pt

\paragraph{\bf Funding} No funding supported this work.

\vskip 12pt

\paragraph{\bf Acknowledgements} J.-P.M thanks the France 2030 framework programme Centre Henri Lebesgue ANR-11-LABX-0020-01 
for creating an attractive mathematical environment.

\vskip 12pt

\paragraph{\bf Author's Note on AI Assistance}
Portions of the text were developed with the assistance of a generative language model (OpenAI ChatGPT, based on the GPT-4 architecture). The AI was used to assist with drafting, editing, and standardizing the bibliography format. All mathematical content, structure, and theoretical constructions were provided, verified, and curated by the author. The author assumes full responsibility for the correctness, originality, and scholarly integrity of the final manuscript.

\end{document}